\documentclass[aps,prx,reprint,longbibliography,amsmath,amssymb,superscriptaddress,showkeys]{revtex4-1}
\setcounter{secnumdepth}{2}

\usepackage[breaklinks,colorlinks=true]{hyperref}
\usepackage[T1]{fontenc}
\usepackage[utf8]{inputenc}
\usepackage{mathtools}
\usepackage{amsmath,amsthm,amssymb}
\usepackage[dvipsnames]{xcolor}
\usepackage{bm}
\usepackage{dsfont}
\usepackage{lipsum}
\usepackage{calrsfs}
\usepackage{tikz-cd}
\usepackage{algorithm}
\usepackage{minted}
\usepackage{xspace}
\allowdisplaybreaks
\usepackage{tikz}
\usetikzlibrary{external}

\def\compiletikz{0}

\if1\compiletikz
\fi

\newcommand\myshade{85}
\colorlet{mylinkcolor}{BrickRed}
\colorlet{mycitecolor}{NavyBlue}
\colorlet{myurlcolor}{Aquamarine}

\hypersetup{
  linkcolor  = mylinkcolor!\myshade!black,
  citecolor  = mycitecolor!\myshade!black,
  urlcolor   = myurlcolor!\myshade!black,
  colorlinks = true,
}

\DeclareMathAlphabet{\pazocal}{OMS}{zplm}{m}{n}

\newcommand\mydots{\hbox to 1em{.\hss.\hss.}}

\newcommand{\gdot}[0]{\!\cdot\!}

\def\equationautorefname#1#2\null{Eq.#1(#2\null)}

\DeclareMathOperator{\diag}{diag}
\makeatletter
\DeclareRobustCommand{\cev}[1]{%
  \mathpalette\do@cev{#1}%
}
\newcommand{\do@cev}[2]{%
  \fix@cev{#1}{+}%
  \reflectbox{$\m@th#1\vec{\reflectbox{$\fix@cev{#1}{-}\m@th#1#2\fix@cev{#1}{+}$}}$}%
  \fix@cev{#1}{-}%
}
\newcommand{\fix@cev}[2]{%
  \ifx#1\displaystyle
    \mkern#23mu
  \else
    \ifx#1\textstyle
      \mkern#23mu
    \else
      \ifx#1\scriptstyle
        \mkern#22mu
      \else
        \mkern#22mu
      \fi
    \fi
  \fi
}
\makeatother

\newcommand{\past}[2]{\cev{\bm #1}_{#2}}

\newcommand{\finfut}[3]{\vec{\bm #1}^{#3}_{#2}}

\begin{document}

\title{Symmetries at the origin of hierarchical emergence}

\author{Fernando E. Rosas}
\email{f.rosas@sussex.ac.uk}
\affiliation{Sussex~AI~and~Sussex~Centre~for~Consciousness~Science,~Department~of~Informatics,~University~of~Sussex}
\affiliation{Centre for Complexity Science and Department of Brain Sciences, Imperial College London}
\affiliation{Centre for Eudaimonia and Human Flourishing, University of Oxford}

\newtheorem{definition}{Definition}
\newtheorem{conjecture}{Conjecture}
\newtheorem{theorem}{Theorem}
\newtheorem{lemma}{Lemma}
\newtheorem{proposition}{Proposition}
\newtheorem{corollary}{Corollary}
\newtheorem{example}{Example}
\newtheorem{remark}{Remark}

\begin{abstract}

\noindent
Many systems of interest exhibit nested emergent layers with their own rules and regularities, and our knowledge about them seems naturally organised around these levels. 
This paper proposes that this type of hierarchical emergence arises as a result of underlying symmetries. 
By combining principles from information theory, group theory, and statistical mechanics, one finds that dynamical processes that are equivariant with respect to a symmetry group give rise to emergent macroscopic levels organised into a hierarchy determined by the subgroups of the symmetry. 
The same symmetries happen to also shape Bayesian beliefs, yielding hierarchies of abstract belief states that can be updated autonomously at different levels of resolution. 
These results are illustrated in Hopfield networks and Ehrenfest diffusion, showing that familiar macroscopic quantities emerge naturally from their symmetries. 
Together, these results suggest that symmetries provide a fundamental mechanism for emergence and support a structural correspondence between objective and epistemic processes, making feasible inferential problems that would otherwise be computationally intractable.
\end{abstract}

\maketitle

\section{Introduction}

Our ability to successfully navigate and learn about the world, despite the limitations imposed by `no free lunch' theorems~\cite{wolpert1996lack,adam2019no}, suggests that our world has special properties that make this success possible. 
A key factor enabling this seems to be that the world displays nested yet distinct layers of organisation --- a property often called \emph{emergence}~\cite{jensen2022complexity,clayton2006re,carroll2024emergence}. 
Emergence gives rise to a separation of scales that allows each level to exhibit its own rules and regularities, which can be learned while disregarding the activity of levels below~\cite{rosas2024software,krakauer2025large}. 
This hierarchical architecture is mirrored in the organisation of knowledge into scientific disciplines (physics $\to$ chemistry $\to$ biology $\to$ sociology and economics), where each field abstracts away details of the levels below to form its own effective theories~\cite{anderson1972more}.

Emergence has been studied in statistical physics via phase transitions and renormalisation group theory, which provide a powerful framework that describes how separation of scales occurs and what the resulting emergent laws are (i.e., universality)~\cite{kadanoff1966scaling,wilson1979problems,sole2011phase,dupuis2021nonperturbative}. 
Complementing this work, a line of investigations inspired by theoretical neuroscience has focused on multiscale dynamics, characterising various aspects of emergence in terms of information-theoretic principles~\cite{seth2010measuring,hoel2013quantifying,rosas2020reconciling,barnett2023dynamical}. 
These approaches provide rigorous tools to identify when emergent levels arise in general systems far from the thermodynamic limit, but have not yet identified fundamental principles that can explain how this happens. 
Furthermore, all of these approaches are limited in the degree to which they account for how emergence enhances learnability and inference.

Building on these developments, this paper proposes symmetry --- more precisely, dynamical equivariance --- as a fundamental driver of hierarchical emergence in both objective and epistemic processes. 
The results presented here show that when the dynamics of a stochastic process commute with a symmetry group, they naturally induce coarse-grained variables whose evolution is informationally self-contained, so that microscopic details of their instantiation are irrelevant for predicting their future. 
Moreover, the algebraic structure of the symmetry group reveals a whole hierarchy of emergent variables. Each subgroup corresponds to a coarser or finer partition of the microscopic state space, and the subgroup lattice is mirrored by a partial order over informationally closed macroscopic levels.

Crucially, the same symmetries that generate a hierarchy in objective processes also shape the structure of Bayesian beliefs held by an observer who only has partial access to the system. 
The hierarchical nature of these beliefs allows tracking macroscopic properties of high-dimensional systems while disregarding low-level details, making otherwise unfeasible inference problems tractable. 
These ideas are illustrated on paradigmatic models from statistical mechanics and computational neuroscience, where familiar macroscopic quantities (such as magnetisation or particle count) emerge exactly as informationally closed macroscopic processes naturally arising from underlying symmetries.

\section{Symmetry and hierarchical emergence}

\subsection{Emergence as informational closure} 

Consider a system whose state is specified by the random variable $X_t\in\mathcal{X}$ that is measured at discrete timepoints $t\in\mathbb{Z}$. A coarse-graining mapping $\phi:\mathcal{X}\to\mathcal{Z}$ gives rise to another process $Z_t=\phi(X_t)$. 
This implies that any change in $Z_t$ requires a change in $X_t$ but not vice-versa --- an asymmetry known as supervenience~\cite{kim1990supervenience}. 

We call a coarse-graining $Z_t$ emergent if, for the purpose of predicting the coarse-grained future $\finfut{Z}{t+1}{L}=(Z_{t+1},\dots,Z_{t+L})$, knowing the past $\past{X}{t}=(\dots,X_{t-1},X_t)$ offers no advantage over knowing only the coarse-grained past $\past{Z}{t}=(\dots,Z_{t-1},Z_t)$, which implies that all relevant information has `moved up' to the coarse-grained level. This idea can be operationalised via the condition of \emph{information closure}~\cite{bertschinger2006information,chang2020information}, which requires that
\begin{equation}\label{eq:info_close}
    I\big( \past{X}{t} ; \finfut{Z}{t+1}{L} \big) 
    - 
    I\big( \past{Z}{t} ; \finfut{Z}{t+1}{L} \big) 
    =0
    \quad \forall L\in\mathbb{N}, t\in\mathbb{Z},
\end{equation}
where $I$ is Shannon's mutual information. 

Information closure can be shown to be equivalent to
\begin{equation}\label{eq:alternative_condition}
\mathbb{E}\left\{
\log \frac{p\big(\finfut{Z}{t+1}{L} | \past{X}{t}\big) }
{p\big(\finfut{Z}{t+1}{L} | \past{Z}{t}\big) } \right\} = 0
\quad \forall L\in\mathbb{N}, t\in\mathbb{Z}.
\end{equation}
As the above expression corresponds to a Kullback-Leibler divergence, this ensures that $p\big(\finfut{Z}{t+1}{L} | \past{X}{t}\big) = 
p\big(\finfut{Z}{t+1}{L} | \past{Z}{t}\big)$ almost surely. 
This implies that the distinct initial conditions of the system that lead to different macroscopic outcomes can be effectively attained from a suitable macroscopic initial condition. Thus, information closure corresponds to self-contained levels of $X_t$ whose regularities are suitable for being described in their own terms~\cite{rosas2024software,barnett2023dynamical,chang2020information}. For this reason, in the sequel informationally closed coarse-grainings will be called `emergent levels' or `macroscopic variables.'

\begin{figure*}[t!]
  \centering
  \if1\compiletikz
  \includetikz{tikz/}{lattices}
  \else
    \includegraphics[width=2\columnwidth]{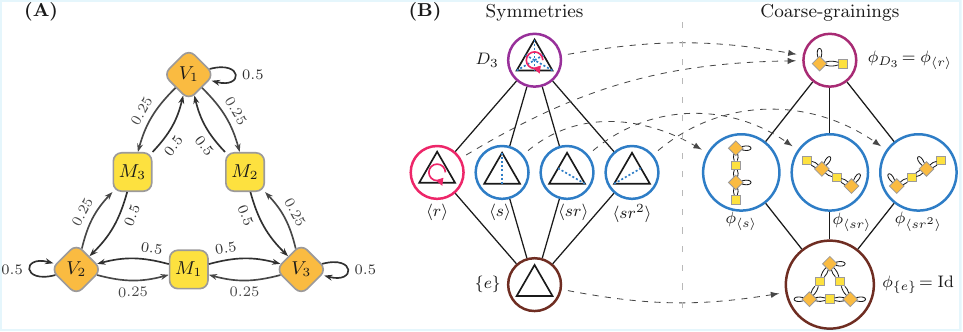}
  \fi 
  \caption{\textbf{Symmetry gives rise to simpler self-contained levels}. \textbf{(A)} Dynamics of a Markov chain over 6 states, which respects symmetries equivalent to a triangle --- known as the dihedral group of order 3, $D_3$. \textbf{(B)} Correspondence between the subgroups of $D_3$ and the lattice of informationally closed levels of the Markov chain, which preserves the partial ordering.}
  \label{fig:elementary}
\end{figure*}

\subsection{Dynamical symmetries}

Consider now a group $G$ acting on $X_t$, so that there exist mappings $g(\cdot):\mathcal{X}\to\mathcal{X}$ for all $g\in G$ such that $(g_1 g_2)\cdot x = g_1\cdot (g_2\gdot x)$. 
A group action induces a partition on $\mathcal{X}$ via `orbits' of the form
\begin{equation}
O_x := \{ x'\in\mathcal{X}: g\gdot x = x', g\in G\}.
\end{equation}
This leads to the coarse-graining $Z_t=\phi_G(X_t)$ given by $\phi_G(x):=O_x$~\footnote{Any coarse-graining of a discrete state space can be seen as arising from a group in this manner. Consider a coarse-graining $\phi$ corresponding to a partition with components $C_i$. Consider $G_i$ the group of permutations of elements in $C_i$, and form $G_\phi=\prod_{i\in \mathcal{I}} G_i$ their cartesian product. Then, the partition corresponds to the orbits of $G_\phi$.}. 
For simplicity, the presentation will focus on finite groups $G$ and discrete random variables.

Under what conditions may the orbits of $G$ generate an informationally closed level? To answer this, let us introduce the Reynolds operator~\cite{reynolds1895iv}
\begin{equation}
    \mathcal{R}\{ \mu \} = \frac{1}{|G|} \sum_{g\in G} g \gdot \mu, 
\end{equation}
which corresponds to averaging over all symmetry-equivalent versions of the dynamics. 
To investigate whether the dynamics respects symmetries, one can average the transition kernel over the group using the Reynolds operator to test if it behaves similarly across symmetry-related microstates. 
When applied on the conditional law over future blocks $K_{t+1}^L(\past{x}{t}) := p(\finfut{x}{t+1}{L}|\past{x}{t})$, the Reynolds operator gives
\begin{equation}\label{eq:reynold_kernel}
    \mathcal{R} \left\{ K_{t+1}^L \right\} (\past{x}{t}) 
    = 
    \frac{1}{|G|^L}\sum_{\finfut{g}{t+1}{L}\in G^L} 
    p(\finfut{g}{t+1}{L}\gdot\finfut{x}{t+1}{L}|\past{x}{t}),
\end{equation}
where $\finfut{g}{t}{L}\gdot\finfut{x}{t}{L} = (g_t\gdot x_t,\dots,g_{t+L}\gdot x_{t+L})$ is the component-wise application of a trajectory of group elements and $G^L$ is the $L$-th Cartesian product of $G$. 
If the resulting averaged kernel is invariant under the group action, then transitions respect the symmetry and the orbit coarse-graining is informationally closed, as shown next.

\begin{theorem}\label{teo:main_result}
The coarse-graining $Z_t=\phi_G(X_t)$ is informationally closed if and only if $\mathcal{R} \{K_{t+1}^L\} (\past{x}{t})$ is invariant to the action of $G$.
\end{theorem}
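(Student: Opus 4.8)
The plan is to reduce both sides of the claimed equivalence to a single statement about where the \emph{orbit-future law}
$$p(\finfut{z}{t+1}{L}\mid\past{x}{t}):=p\big(\finfut{Z}{t+1}{L}=\finfut{z}{t+1}{L}\mid\past{X}{t}=\past{x}{t}\big)$$
is constant, and then to bridge this law to the Reynolds-averaged kernel by a combinatorial identity. First I would restate information closure pointwise. By the Kullback--Leibler argument following Eq.~\eqref{eq:alternative_condition}, closure is equivalent to $p(\finfut{z}{t+1}{L}\mid\past{x}{t})=p(\finfut{z}{t+1}{L}\mid\past{z}{t})$ almost surely; since $\past{z}{t}=\phi_G(\past{x}{t})$ is a deterministic function of $\past{x}{t}$, this says exactly that $\past{x}{t}\mapsto p(\finfut{z}{t+1}{L}\mid\past{x}{t})$ is constant on each fibre of the coarse-graining. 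The key structural fact is that two pasts share an orbit sequence precisely when they differ by a time-indexed family of group elements acting component-wise (in the paper's convention $\past{h}{}\gdot\past{x}{t}$), so the fibres of $\phi_G$ on pasts are exactly the orbits of the $G$-action. Hence closure is equivalent to $p(\finfut{z}{t+1}{L}\mid\past{h}{}\gdot\past{x}{t})=p(\finfut{z}{t+1}{L}\mid\past{x}{t})$ for every such $\past{h}{}$.

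Next I would build the bridge. Unfolding the preimage gives $p(\finfut{z}{t+1}{L}\mid\past{x}{t})=\sum_{\finfut{y}{t+1}{L}}p(\finfut{y}{t+1}{L}\mid\past{x}{t})$, where the sum runs over all microscopic future blocks whose orbit sequence is $\finfut{z}{t+1}{L}$, i.e.\ over the Cartesian product $O_{x_{t+1}}\times\cdots\times O_{x_{t+L}}$. In Eq.~\eqref{eq:reynold_kernel} the element $\finfut{g}{t+1}{L}\gdot\finfut{x}{t+1}{L}$ sweeps over this same product as $\finfut{g}{t+1}{L}$ ranges over $G^L$, but by the orbit--stabiliser theorem each block is hit exactly $\prod_{s}|G|/|O_{x_s}|$ times. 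Collecting terms and using $\prod_{s=t+1}^{t+L}|G|=|G|^L$ yields the key identity
\begin{equation*}
\mathcal{R}\{K_{t+1}^L\}(\past{x}{t})\big[\finfut{x}{t+1}{L}\big]
=\frac{1}{\prod_{s=t+1}^{t+L}|O_{x_s}|}\,p\big(\finfut{z}{t+1}{L}\mid\past{x}{t}\big),
\end{equation*}
valid for any representative block $\finfut{x}{t+1}{L}$ with orbit sequence $\finfut{z}{t+1}{L}$. The decisive feature is that the prefactor depends only on the orbit sizes $|O_{x_s}|$, hence only on $\finfut{z}{t+1}{L}$, and carries no dependence on the past.

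The theorem then follows by matching invariances. I would first note that the Reynolds average is automatically invariant under the action on its future argument: replacing $\finfut{x}{t+1}{L}$ by $\finfut{h}{t+1}{L}\gdot\finfut{x}{t+1}{L}$ merely reindexes the sum over $G^L$. So the nontrivial content of ``$\mathcal{R}\{K_{t+1}^L\}$ is invariant to the action of $G$'' is invariance in the past argument, $\mathcal{R}\{K_{t+1}^L\}(\past{h}{}\gdot\past{x}{t})=\mathcal{R}\{K_{t+1}^L\}(\past{x}{t})$. Because the prefactor above is past-independent, this past-invariance holds (for every future argument) if and only if $p(\finfut{z}{t+1}{L}\mid\past{h}{}\gdot\past{x}{t})=p(\finfut{z}{t+1}{L}\mid\past{x}{t})$ for all $\past{h}{}$ and all orbit sequences, which by the first step is exactly information closure. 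Both implications of the ``if and only if'' then read off directly.

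I expect the orbit--stabiliser bookkeeping of the middle step to be the main obstacle: one must verify that every block in the orbit product is produced with the correct multiplicity, that these multiplicities factorise across time into orbit sizes, and that the resulting prefactor therefore separates cleanly from the past dependence. A secondary technical point is handling the ``almost surely'' qualifier inherited from the Kullback--Leibler characterisation and the infinite past uniformly in $L$ and $t$, but neither affects the combinatorial heart of the argument.
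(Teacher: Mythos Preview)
Your proposal is correct and follows essentially the same route as the paper: both arguments hinge on the orbit--stabiliser identity expressing $p(\finfut{z}{t+1}{L}\mid\past{x}{t})$ as the Reynolds-averaged kernel times a past-independent combinatorial factor (the size of the orbit product), and then observe that invariance of one side under the $G$-action on the past is equivalent to invariance of the other, which is the pointwise form of information closure. Your write-up is somewhat more explicit than the paper's about why the future-argument invariance of $\mathcal{R}\{K_{t+1}^L\}$ is automatic and about the multiplicity bookkeeping, but the skeleton and the key lemma are identical.
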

\begin{proof}
We need to prove that $Z_t = \phi_G(X_t)$ is informationally closed if and only if 
$\mathcal{R} \{ K_{t+1}^L \} ( \past{g}{t}\cdot \past{x}{t}) 
=
\mathcal{R} \{K_{t+1}^L\} (\past{x}{t})$ for all $\past{g}{t}$ and all $t\in\mathcal{Z}$ and $L\in\mathbb{N}$. 
If $Z_t = \phi_G(X_t)$ is given by the group orbits, then
\begin{align}
    p(\finfut{z}{t+1}{L} |\past{x}{t})
    &= 
    \!\!\!\!\!
    \sum_{\finfut{x}{t+1}{L}\in \phi_{G^L}^{-1}(\finfut{z}{t+1}{L})}
    \!\!\!\!\!
    p( \finfut{x}{t+1}{L} | \past{x}{t} ) \\
    &\overset{(a)}{=} 
    \left|\phi_{G^L}^{-1}(\finfut{z}{t+1}{L})\right| \mathcal{R} \left\{K_{t+1}^L\right\}(\past{x}{t}),
\end{align}
where (a) is a consequence of the orbit-stabiliser theorem. Given that $p(\finfut{z}{t+1}{L} |\past{x}{t}) = p(\finfut{z}{t+1}{L} |\past{g}{t}\gdot\past{x}{t})$ for all $\finfut{g}{t+1}{L}\in G^L$ is equivalent to \autoref{eq:alternative_condition}, and hence also to \autoref{eq:info_close}, this proves the desired result.
\end{proof}

This general result has a simple yet powerful consequence for Markov processes. 
One says that a Markov process $X_t$ is \emph{dynamically equivariant} if there exists a group action such that
\begin{equation}
    p(g\gdot x_{t+1}|g\gdot x_t)=p(x_{t+1}|x_t)
\end{equation}
for all $x_t,x_{t+1}\in\mathcal{X},g\in G$. 
Each orbit of this action corresponds to a set of microstates that are dynamically indistinguishable under the symmetry. One can, therefore, define a macroscopic variable $Z_t=\phi_G(X_t)$ that records only which orbit the system is in. 
The next result is that this orbit variable is informationally closed~\footnote{Note that \autoref{cor:equivariance} shows that equivariance is sufficient for emergence, but not necessary. A necessary condition is given by Markov lumpability, 
which can be seen as a local symmetry using the formalism of fibration symmetries.}

\begin{corollary}\label{cor:equivariance}
    If a Markov process  $X_t$ is dynamically equivariant, then $Z_t=\phi_G(X_t)$ is informationally closed.
\end{corollary}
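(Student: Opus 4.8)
The plan is to derive the corollary directly from \autoref{teo:main_result}, which reduces the claim to verifying that dynamical equivariance forces the Reynolds-averaged kernel $\mathcal{R}\{K_{t+1}^L\}(\past{x}{t})$ to be invariant under the action of $G$. First I would exploit the Markov property to simplify the kernel: since $p(\finfut{x}{t+1}{L}|\past{x}{t}) = p(\finfut{x}{t+1}{L}|x_t) = \prod_{s=t+1}^{t+L} p(x_s|x_{s-1})$, both the kernel and its Reynolds average depend on the past only through the present microstate $x_t$. Consequently, invariance under the full past action $\past{x}{t}\mapsto\past{g}{t}\gdot\past{x}{t}$ collapses to invariance under the single substitution $x_t\mapsto h\gdot x_t$ for every $h\in G$.

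The central lemma I would establish is that applying one and the same group element $g$ to every time index leaves the block-transition probability unchanged, namely $p(g\gdot\finfut{x}{t+1}{L}|g\gdot x_t)=p(\finfut{x}{t+1}{L}|x_t)$. This follows by inserting the Markov factorisation and applying the equivariance identity $p(g\gdot x_{s}|g\gdot x_{s-1})=p(x_s|x_{s-1})$ to each factor in turn. This \emph{diagonal} invariance is the bridge between the hypothesis, which concerns a single group element, and the Reynolds operator, which averages over an independent element at each of the $L$ future steps.

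With this lemma in hand, I would compute $\mathcal{R}\{K_{t+1}^L\}(h\gdot x_t)$ and use the diagonal invariance with the element $h^{-1}$ to rewrite each summand $p(\finfut{g}{t+1}{L}\gdot\finfut{x}{t+1}{L}|h\gdot x_t)$ as $p((h^{-1}g_{t+1})\gdot x_{t+1},\dots,(h^{-1}g_{t+L})\gdot x_{t+L}|x_t)$. Reindexing each summation variable by the bijection $g_s\mapsto h g_s$ of $G$ then restores the original sum, yielding $\mathcal{R}\{K_{t+1}^L\}(h\gdot x_t)=\mathcal{R}\{K_{t+1}^L\}(x_t)$. Invoking \autoref{teo:main_result} completes the argument.

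I expect the main obstacle to be precisely this mismatch between the single-element hypothesis and the product group $G^L$ over which the Reynolds operator averages. The resolution hinges on recognising that the diagonal (same-element) invariance of the block kernel, combined with the invariance of a sum over $G^L$ under the coordinatewise left translation $g_s\mapsto h g_s$, suffices to absorb the action on $x_t$; once these two ingredients are correctly aligned, the remaining manipulations are routine.
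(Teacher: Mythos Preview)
Your proposal is correct and matches the paper's own proof essentially step for step: both reduce to dependence on $x_t$ via Markovianity, apply the equivariance identity factor-by-factor with the inverse of the element acting on $x_t$, and then reindex the $G^L$ sum by a left translation to recover the original Reynolds average. The only cosmetic difference is that you single out the diagonal block invariance $p(g\gdot\finfut{x}{t+1}{L}\mid g\gdot x_t)=p(\finfut{x}{t+1}{L}\mid x_t)$ as a separate lemma, whereas the paper applies the same move inline on the factored product.
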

\begin{proof}
   In the Markov case $K_{t+1}^L$ only depends on $x_t$, so one can write $K_{t+1}^L(x_t)$. 
   Under those conditions, 
   \begin{align}
   |G&|^L \mathcal{R} \{ K_{t+1}^L \} ( g_t\gdot x_t)
   \!=
   \!\!\!\!\!\!
   \sum_{\finfut{g}{t+1}{L}\!\in G^{L}} 
   \!
   \prod_{\tau=t}^{t+L}
   p(g_{\tau+1}\!\gdot x_{\tau+1}|g_\tau\gdot x_\tau) \\
   &=
   \sum_{\finfut{g}{t+1}{L}\in G^L} 
   \prod_{\tau=t}^{t+L}
   p\big((g_t^{-1}\gdot g_{\tau+1})\gdot x_{\tau+1}|(g^{-1}_t\gdot g_\tau)\gdot x_\tau) \\
   &=
   \sum_{\finfut{g}{t+1}{L}\in G^L} 
   p\big((g^{-1}_t\!\cdot \finfut{g}{t+1}{L})\cdot \finfut{x}{t+1}{L}| x_t) \\
   &=
   |G|^L \mathcal{R} \{ K_{t+1}^L \} (x_t),
   \end{align}
   where $g^{-1}_t\!\cdot \finfut{g}{t+1}{L} = (g_t^{-1}\!\cdot g_{t+1},\dots,g_t^{-1}\!\cdot g_{t+L})$ denotes the component-wise product.
\end{proof}

An elementary example of an equivariant Markov chain is given in \autoref{fig:elementary}A. 
When applied to deterministic dynamical systems (which become Markov processes when considering random initial conditions~\cite{rosas2018information}), this corollary generalises approaches to factor out symmetries of equivariant maps~\cite{field1980equivariant,golubitsky2012singularities}.

\subsection{Hierarchical organisation}

So far, we have seen how a single symmetry yields one emergent macroscopic description. However, most systems exhibit many symmetries, corresponding to groups with non-trivial algebraic structures. In particular, they can have a family of proper subgroups, which naturally introduces a hierarchy of emergent levels, as shown by the next corollary. 

\begin{corollary}\label{cor:subgroups}
    Let $X_t$ be a Markov process that is equivariant with respect to $G$, and $H$ be a subgroup of $G$. If $Z_t=\phi_G(X_t)$ and $Z'_t=\phi_H(X_t)$, then $Z_t$ is informationally closed with respect to both $X_t$ and $Z'_t$.
\end{corollary}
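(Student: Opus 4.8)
The plan is to establish the two claimed closure statements separately, exploiting the fact that $H \leq G$ means every orbit of $H$ is contained in a single orbit of $G$, so that $\phi_G$ factors through $\phi_H$. Concretely, since the $H$-orbits refine the $G$-orbits, there is a well-defined surjection $\psi:\mathcal{Z}'\to\mathcal{Z}$ with $Z_t = \psi(Z'_t)$, i.e. $\phi_G = \psi\circ\phi_H$. I would record this refinement relation at the outset, since both parts of the corollary rest on it.

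For closure of $Z_t$ with respect to $X_t$, the cleanest route is simply to invoke \autoref{cor:equivariance}: $X_t$ is equivariant with respect to $G$ by hypothesis, and $Z_t=\phi_G(X_t)$ is exactly the orbit coarse-graining of that same group, so \autoref{cor:equivariance} immediately gives that $Z_t$ is informationally closed with respect to $X_t$. No new work is needed here beyond citing the previous corollary.

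The substantive part is closure of $Z_t$ with respect to $Z'_t$. First I would check that $X_t$ being equivariant under $G$ forces $Z'_t=\phi_H(X_t)$ to itself be a Markov process that is equivariant under $G$ (with $G$ acting on $H$-orbits through the induced action $g\cdot O^H_x := O^H_{g\cdot x}$, which is well-defined precisely because conjugation permutes the cosets and the equivariance of the transition kernel descends to orbits via \autoref{cor:equivariance}'s closure guaranteeing that $Z'_t$ has autonomous Markov dynamics). The key observation is that the $G$-action on $\mathcal{X}$ descends to a $G$-action on the finer level $\mathcal{Z}'$ whose orbits are exactly the fibres of $\psi$, i.e. $\phi_G(X_t)=\psi(Z'_t)$ is the orbit coarse-graining of this descended $G$-action on $Z'_t$. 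Once that is set up, I would apply \autoref{cor:equivariance} a second time, now with base process $Z'_t$ and symmetry group $G$ acting on $\mathcal{Z}'$, to conclude that $Z_t=\psi(Z'_t)$ is informationally closed with respect to $Z'_t$.

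The main obstacle I anticipate is justifying that the $G$-action genuinely \emph{descends} to the quotient $\mathcal{Z}'$ and that the descended kernel on $Z'_t$ is equivariant in the precise sense required by \autoref{cor:equivariance}. Two things must be verified: that $g\cdot O^H_x$ depends only on the orbit $O^H_x$ and not on the representative $x$ (which needs $g H g^{-1}$ to act compatibly on the $H$-orbit structure, automatic when $H\trianglelefteq G$ and requiring a short argument otherwise), and that $Z'_t$ is itself Markov with a kernel satisfying $p(g\cdot z'_{t+1}\mid g\cdot z'_t)=p(z'_{t+1}\mid z'_t)$. The Markovianity and equivariance of the coarse-grained $Z'_t$ follow from the equivariance of $X_t$ together with the closure of $Z'_t$ established via \autoref{cor:equivariance}, but this inheritance is the step that deserves careful statement rather than a routine calculation, and I would make sure the well-definedness of the descended action is spelled out before invoking \autoref{cor:equivariance} the second time.
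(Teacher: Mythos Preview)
Your first paragraph correctly records the refinement $\phi_G = \psi \circ \phi_H$, and your argument for closure of $Z_t$ with respect to $X_t$ (direct appeal to \autoref{cor:equivariance}) matches the paper.

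For the second part, however, the paper takes a much shorter route that avoids any group action on $\mathcal{Z}'$ altogether. It simply sandwiches with the data processing inequality:
\[
0 \;\le\; I(\past{Z}{t}'; \finfut{Z}{t+1}{L}) - I(\past{Z}{t}; \finfut{Z}{t+1}{L}) \;\le\; I(\past{X}{t}; \finfut{Z}{t+1}{L}) - I(\past{Z}{t}; \finfut{Z}{t+1}{L}) \;=\; 0,
\]
the left inequality coming from $Z_t = \psi(Z'_t)$, the right from $Z'_t = \phi_H(X_t)$, and the final equality from the closure of $Z_t$ in $X_t$ just established. Only the set-theoretic refinement is used; no action needs to descend, and there is no need to verify that $Z'_t$ is Markov or equivariant.

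Your route, by contrast, has a genuine gap when $H$ is not normal. You correctly flag that well-definedness of $g \cdot O^H_x := O^H_{g \cdot x}$ is automatic only when $H \trianglelefteq G$, but the claim that the general case needs merely ``a short argument otherwise'' is wrong. If $x' = h \cdot x$ then $g \cdot x' = (ghg^{-1}) \cdot (g \cdot x)$, which lies in $O^H_{g \cdot x}$ only when $ghg^{-1} \in H$ (modulo stabilisers); for non-normal $H$ the $G$-action simply does \emph{not} descend to the $H$-orbit space. Concretely, take $G = S_3$ acting on $\{1,2,3\}$ with $H = \langle (12) \rangle$: applying $(13)$ sends the $H$-orbit $\{1,2\}$ to $\{2,3\}$, which is not an $H$-orbit. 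Since the corollary is stated for arbitrary subgroups, this is not a corner case you can patch, and the second invocation of \autoref{cor:equivariance} is unavailable. The data-processing sandwich is both simpler and immune to this obstruction.
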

\begin{proof}
The first part of the result follows directly from the fact that if $X_t$ is equivariant with respect to $G$, it is also equivariant with respect to any subgroup $H$, and hence \autoref{cor:equivariance} also applies. 
To show that $Z_t$ is informationally closed with respect to $Z'_t$, note that the fact that $H$ is a subgroup of $G$ implies that $Z'_t$ is a finer coarse-graining of $X_t$ than $Z_t$, and hence there exists a coarse-graining mapping $\phi'$ such that $Z_t=\phi'(Z'_t)$. Then, the data processing inequality implies that
\begin{align}
    0
    &\leq 
    I(\past{Z}{t}';\finfut{Z}{t+1}{L}) - I(\past{Z}{t};\finfut{Z}{t+1}{L})  \nonumber\\
    &\leq
    I(\past{X}{t};\finfut{Z}{t+1}{L}) - I(\past{Z}{t};\finfut{Z}{t+1}{L}) 
    = 
    0.
\end{align}
\end{proof}
\autoref{cor:subgroups} implies that an equivariant Markov process has a whole hierarchy of emergent levels, whose organisation reflects the lattice structure of the subgroups of $G$ (see \autoref{fig:elementary}B). 
Concretely, if $H\subseteq G$ then the orbits under $G$ are unions of orbits under $H$, so $Z_t=\phi_G(X_t)$ is coarser (i.e. less detailed) than $Z'_t=\phi_H(X_t)$. 
This implies that there is a homomorphism between the lattice of subgroups of $G$ and the informationally closed coarse-grainings, which preserves the partial ordering but can result in a simpler arrangement. 
In this hierarchy, the most coarse-grained level corresponds to the group of all symmetries ($G$), and the finest level to the trivial subgroup with single element orbits ($\{e\}$).

\begin{figure*}[t!]
  \centering
  \if1\compiletikz
  \includetikz{tikz/}{hopfield}
  \else
    \includegraphics[width=2\columnwidth]{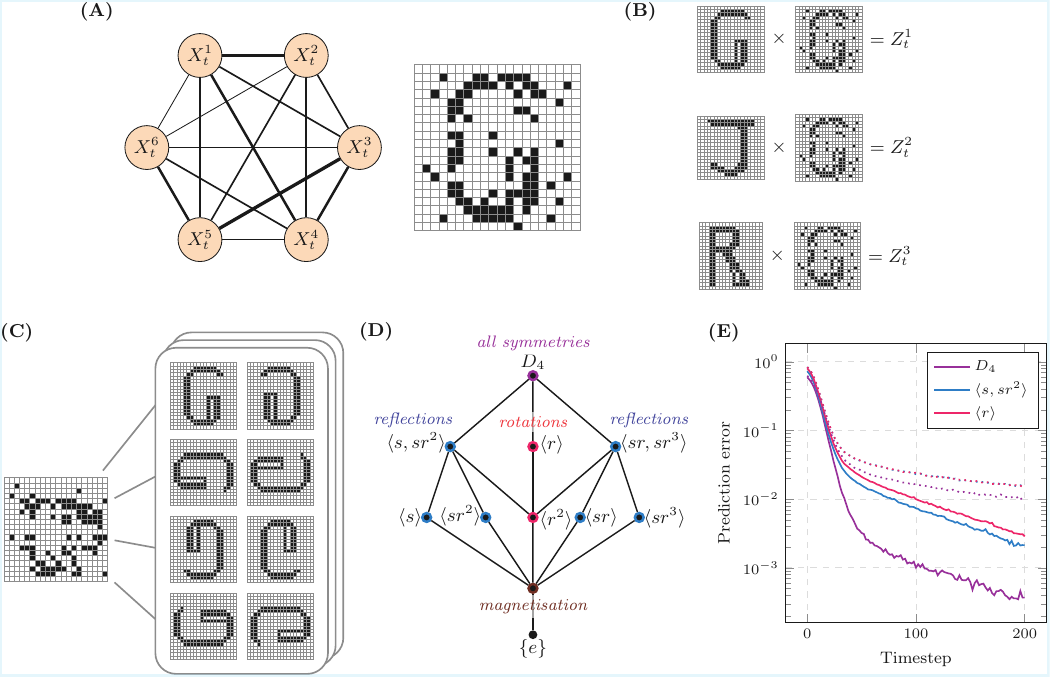}
  \fi 
  \caption{\textbf{Equivariant symmetries in a Hopfield network}. 
  \textbf{(A)} A Hopfield network is a recurrent neural network trained to retrieve a given set of patterns. 
  \textbf{(B)} The dot products between each candidate pattern and the current state of the network are equivariant to the neural dynamics, providing natural order parameters known as Mattis magnetisation. 
  \textbf{(C)} Training the network over letters and all their rotations and reflections induce additional equivariances to the dihedral group $D_4$. 
  \textbf{(D)} The subgroups of $D_4$ correspond to different informationally closed coarse-grainings. For example, factoring all symmetries out yields a representation that captures only letter identity, while factoring by the rotation subgroup ($\langle r\rangle$) preserves letter identity and whether it is reflected. 
  \textbf{(E)} Numerical results demonstrate that these symmetry-derived coarse-grainings (solid lines) yield substantially better self-prediction than arbitrary coarse-grainings of the same size (dotted lines matched by color), confirming that symmetry-based emergent variables capture all dynamically relevant information.}
\label{fig:hopfield}
\end{figure*}

\section{Case study 1: Hopfield networks} 
\label{sec:hopfield}

Let's illustrate the relationship between emergence and symmetry in Hopfield networks, a paradigmatic model of memory storage and retrieval~\cite{amari1972learning,hopfield1982neural} which has played a longstanding role linking machine learning and physics~\cite{hinton1986learning,krotov2023new,aguilera2025explosive}. 
We will see how the network dynamics' and the symmetries in the stored patterns give rise to macroscopic order parameters whose dynamics are informationally closed.

A Hopfield network is a recurrent neural network where each neuron connects to all others, and the connection weights are designed so that certain patterns are attractors~\cite{amit1989modeling,gerstner2014neuronal} (see \autoref{fig:hopfield}A). 
The state of the $i$-th neuron $X_t^i\in\{-1,1\}$ updates according to 
\begin{equation}\label{eq:hopfield_dyns}
    \mathbb{P}\big\{X_{t+1}^i=1|X_t^1,\ldots,X_t^n\big\}
    = F\Big(\beta \sum_{j=1}^n w_{i,j} X_t^j \Big)~,     
\end{equation}
where $F(x) = (1+\tanh{x})/2$ is the activation function, $\beta$ is a parameter that regulates the stochasticity of the updates, and $w_{i,j}$ is the strength of the synapse from the $j$-th neuron. 
The dynamics of the whole network $X_t=(X_t^1,\ldots,X_t^n)$ are given by 
$p(x_{t+1} |x_t) = \prod_j p\big(x_{t+1}^j|x_t\big)$.

To store and retrieve patterns, Hopfield networks typically use Hebbian learning~\cite{hebb1949organization}, which strengthens the connection between neurons that tend to co-activate. 
Given $m$ patterns $q_\mu=(q_\mu^1,\ldots,q_\mu^n) \in\{-1,1\}^n$, this suggests establishing synaptic weights according to
\begin{equation}\label{eq:hebb}
    w_{i,j} = 
    \begin{cases}
        \frac{1}{n}\sum_{\mu} q_\mu^i q_\mu^j 
        &\quad \text{if }i\neq j,\\
        0&\quad \text{otherwise.}
    \end{cases}
\end{equation}
These synaptic weights make each pattern $q_\mu$ a stable fixed point of the dynamics (for large $n$), so the network naturally relaxes toward one of these attractors~\cite[Ch.~17.2]{gerstner2014neuronal}. 
Under these weights, the input potential to the $i$-th neuron can be expressed as
\begin{equation}
    \sum_{j=1}^n w_{i,j} X_t^j 
    = \frac{1}{n}\sum_{\mu=1}^m q_\mu^i \sum_{j=1}^m q_\mu^j X_t^j,
\end{equation}
making the so-called Mattis magnetisation (\autoref{fig:hopfield}B)
\begin{equation}
    Z_t^\mu = \frac{1}{n} (q_\mu \gdot X_t) = \frac{1}{n}\sum_{j=1}^n q_\mu^j X_t^j
\end{equation}
a natural order parameter of the dynamics~\cite[Ch.4.4]{amit1989modeling} (see also~\cite{barra2018new,agliari2023parallel}). 
The vector of Mattis magnetisations $Z_t = (Z_t^1,\dots,Z_t^m)$ tracks the degree of similarity of the current pattern $X_t$ with each of the candidate patterns (a typical run starts with all magnetisations being low and ends with one dominating over the rest).

Let us now investigate the symmetries of this system. For this, let's build a group $G$ containing all the permutations of neurons $\sigma$ that leave the magnetisation unchanged, so that
$\sigma\gdot x_t:=(x_{\sigma(1)}^t,\dots,x_{\sigma(n)}^t)$ satisfies
\begin{equation}\label{eq:dotdot}
\big( q_\mu\gdot (\sigma\gdot x) \big)
= 
( q_\mu\gdot  x )
\quad
\forall \mu=1,\mydots,m,
 x\in\{-1,1\}^n.
\end{equation}
Under this construction, the magnetisation corresponds to the orbits of the symmetry (i.e. $Z_t=\phi_G(X_t)$). 
Moreover, one can show that
\begin{equation}\label{eq:prob_glob}
     p(\sigma\gdot x_{t+1} |\sigma\gdot x_t) 
     = 
     \prod_{j=1}^n  p\big(x_{t+1}^{\sigma(j)}|g\gdot x_t\big)
     = 
     \prod_{j=1}^n  p\big(x_{t+1}^j|x_t\big),
\end{equation}
where the second equality results from $p(x_{t+1}|x_t)$ depending only on the magnetisation $z_t=\phi_G(x_t)$. 
This dynamical equivariance implies, thanks to \autoref{cor:equivariance}, that $Z_t$ is informationally closed, confirming the role of the magnetisation as a macroscopic variable. 
This means that the process of selecting which pattern to retrieve takes place at a macroscopic level ($Z_t$), being (conditionally) independent of the instantiation at the level of individual neurons ($X_t$). 
This implies, for instance, that the identity of the pattern to be retrieved (which is specified by $Z_t$ for $t\gg1$) depends only on the macroscopic initial magnetisation $Z_0$, and not on the additional information provided by $X_0$.

To further study the effect of symmetries on Hopfield networks, let us consider a square arrangement of neurons encoding all reflections and $90^\circ$ rotations of $k_0$ asymmetric letters, leading to $m=8k_0$ patterns (see \autoref{fig:hopfield}C). This choice of patterns causes the synaptic weights, and thus also the neural dynamics (due to \autoref{eq:hopfield_dyns}), to have additional symmetries corresponding to the dihedral group $D_4$ --- which is generated by the $90^\circ$ rotation ($r$) and the horizontal reflection ($s$). 
In this context, the coarse-graining $Z'_t = \phi_{D_4}(Z_t)$ corresponds to the evidence for each letter irrespective of their rotation or reflection.

Additionally, $D_4$ provides 8 proper subgroups (see \autoref{fig:hopfield}D): three subgroups of order 4 and five subgroups of order 2~\footnote{Out of the three groups of order 4, one is a cyclic group generated by $r$ corresponding to the rotations, and two groups generated by reflections isomorphic to Klein's group $\mathbb{Z}_2\times\mathbb{Z}_2$.
The five groups of order 2 are generated by $r^2$, $s$, $sr$, $sr^2$, and $sr^3$.}. 
Each subgroup generates a distinct informationally closed coarse-graining: for instance, the cyclic group of order 4 factors out rotational information, while the two subgroups isomorphic to the Klein group $\mathbb{Z}_2\times\mathbb{Z}_2$ factor out information about certain reflections. Hence, the subgroups of order 4 disentangle rotations and reflections, illuminating the compositional structure of the corresponding operations.

The theory predicts that the information provided at each of these levels should be optimal for predicting itself in the future --- e.g., the information given by $\phi_{D_4}(Z_t)$ should be sufficient to optimally predict what letter will be retrieved, irrespective of its orientation or rotation. 
To further explore this, Hopfield networks of $20\times 20$ neurons were trained on the rotations and reflections of the letters $G$, $J$, and $R$, and used the state of $\phi_{H}(Z_t)$ for different subgroups $H$ at various timepoints to predict itself after 250 timesteps (for details, see Appendix~\autoref{app:hopfield}). 
Results show that the ability of informationally-closed levels to predict their own future is substantially higher than randomly selected coarse-grainings of the same cardinality, which are typically not informationally closed and hence leak information into the micro-level~(see \autoref{fig:hopfield}E). 
This confirms that emergent variables derived from symmetries are not only conceptually natural but also practically desirable for prediction.

\section{Hierarchical abstractions}

After focusing on the objective dynamics of a system, we now turn to an observer who only has access to noisy measurements and maintains probabilistic beliefs about its state. The aim is to show that the same symmetries that generate emergent macrostates give rise to an analogous hierarchy of belief states, which we will call \emph{abstractions}.

\subsection{Multi-level latent models}

Let us investigate the relationship between the hierarchical organisation of a process and measurements of it by focusing on hidden Markov models (HMM). 
In an HMM, a latent (i.e., unobservable) Markov process $X_t$ evolves according to a transition kernel 
$p(x_{t+1}| x_t)$, while observations $Y_t$ are generated through an emission kernel $p(y_t|x_t)$. 
This yields a joint distribution of latent states and measurements of the form
\begin{equation}\label{eq:HMM}
    p(\finfut{x}{0}{t+1},\finfut{y}{0}{t})
    =
    p(x_0)
    \prod_{\tau=0}^{t} 
    p( x_{\tau+1}| x_{\tau})
    p( y_{\tau}|x_{\tau} ) .
\end{equation}

Let us now consider a scenario where (i) the hidden process is dynamically equivariant and (ii) the measurements respect this symmetry. Technically, this requires two group actions from the same group, one over $\mathcal{X}$ and another over $\mathcal{Y}$, satisfying the following conditions:
\begin{subequations}\label{eq:sym_HMM}
\begin{align}
    p(g\gdot x_{t+1}|g\gdot x_t) 
    &= p\big(x_{t+1}|x_t\big), \label{eq:sym_HMMa}\\ 
    p(g\gdot y_{t}|g\gdot x_t)
    &= p\big(y_{t}|x_t\big), \quad \text{and} \label{eq:sym_HMMb}\\
    p(g\gdot x_0)
    &=p(x_0). \label{eq:initial_sym}
\end{align}
\end{subequations}
Systems satisfying these properties will be called \emph{equivariant HMMs}. 
The group actions can be used to build coarse-grained variables
\begin{equation}\label{eq:simp}
Z_t=\phi_G(X_t)     
\quad\text{and}\quad 
V_t=\psi_G(Y_t)
\end{equation}
associated with the orbits arising from these symmetries, which form themselves another HMM (proof in App.~\ref{app:proof_teo_HMM}).
\begin{theorem}\label{teo:HMM}
    If $(X_t,Y_t)$ is an HMM satisfying \autoref{eq:sym_HMM}, then $(Z_t,V_t)$ also form an HMM. 
\end{theorem}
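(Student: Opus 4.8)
The plan is to show that the coarse-grained pair $(Z_t, V_t) = (\phi_G(X_t), \psi_G(Y_t))$ inherits the Markov-emission factorisation structure of \autoref{eq:HMM}. Concretely, I must establish three things: that $Z_t$ is itself a Markov chain with a well-defined transition kernel $p(z_{t+1}|z_t)$; that $V_t$ is conditionally independent of everything else given $Z_t$, with a well-defined emission kernel $p(v_t|z_t)$; and that the joint law of $(\finfut{Z}{0}{t+1}, \finfut{V}{0}{t})$ factorises in exactly the HMM form. The key mechanism throughout is that equivariance makes the relevant microscopic conditional probabilities constant along orbits, so that summing a kernel over the fibre $\phi_G^{-1}(z)$ (or $\psi_G^{-1}(v)$) yields a quantity depending only on the orbit labels.

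First I would handle the latent transition. \autoref{cor:equivariance} already gives that $Z_t = \phi_G(X_t)$ is informationally closed, which is precisely the statement that $p(\finfut{z}{t+1}{L}|\past{x}{t}) = p(\finfut{z}{t+1}{L}|\past{z}{t})$; taking $L=1$ and using the underlying Markov property of $X_t$ yields that $Z_t$ is Markov with a kernel $p(z_{t+1}|z_t)$ that is independent of the microstate within the orbit $\phi_G^{-1}(z_t)$. I would also verify that the initial condition descends: summing $p(x_0)$ over the orbit and using \autoref{eq:initial_sym} gives a well-defined $p(z_0)$. The main work is the emission step. I would compute $p(v_t|x_t) = \sum_{y_t \in \psi_G^{-1}(v_t)} p(y_t|x_t)$ and show, using the orbit-stabiliser theorem together with the emission equivariance \autoref{eq:sym_HMMb}, that this sum is invariant under replacing $x_t$ by $g\gdot x_t$ — hence it depends on $x_t$ only through its orbit $z_t = \phi_G(x_t)$. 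This is the crux: it lets me define $p(v_t|z_t)$ unambiguously, mirroring the argument marked $(a)$ in the proof of \autoref{teo:main_result}.

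With these pieces, the final step is to assemble the joint factorisation. I would start from the micro-level joint \autoref{eq:HMM}, marginalise over the fibres $\phi_{G}^{-1}(z_\tau)$ and $\psi_G^{-1}(v_\tau)$ for each $\tau$, and use the fact that each factor $p(x_{\tau+1}|x_\tau)$ and $p(y_\tau|x_\tau)$ depends on its arguments only through their orbits to pull the orbit-level kernels out of the nested sums. The combinatorial fibre-size factors from the orbit-stabiliser theorem cancel consistently because the same equivariance controls both the transition and emission terms, leaving exactly
\begin{equation}
p(\finfut{z}{0}{t+1}, \finfut{v}{0}{t}) = p(z_0)\prod_{\tau=0}^{t} p(z_{\tau+1}|z_\tau)\, p(v_\tau|z_\tau),
\end{equation}
which is the HMM form \autoref{eq:HMM} at the macroscopic level.

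I expect the main obstacle to be the emission-invariance step, specifically verifying that the stabiliser subgroups and orbit sizes align so that $\sum_{y_t \in \psi_G^{-1}(v_t)} p(y_t|g\gdot x_t)$ genuinely collapses to a function of $z_t$ alone. The subtlety is that the group acts on $\mathcal{X}$ and $\mathcal{Y}$ through two \emph{different} actions, so I must track how \autoref{eq:sym_HMMb} couples them: applying $g$ to $x_t$ must permute the emission fibre $\psi_G^{-1}(v_t)$ onto itself (up to stabiliser multiplicities), and confirming this compatibility — rather than the purely formal bookkeeping of the final factorisation — is where the argument really has to be earned.
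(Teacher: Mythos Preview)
Your plan matches the paper's proof: establish that $p(z_{t+1}\mid x_t)$ and $p(v_t\mid x_t)$ are constant on $G$-orbits of $x_t$, then iteratively marginalise \autoref{eq:HMM} one time-step at a time to obtain the macroscopic factorisation. The paper handles the transition step directly from \autoref{eq:sym_HMMa} rather than via \autoref{cor:equivariance}, but your shortcut there is legitimate.

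The obstacle you flag is simpler than you anticipate, and the orbit--stabiliser theorem is not needed at this point. Because $\psi_G^{-1}(v_t)$ is a full $G$-orbit in $\mathcal{Y}$, the map $y_t \mapsto g^{-1}\gdot y_t$ is a bijection of that fibre onto itself; substituting and applying \autoref{eq:sym_HMMb} gives directly
\[
\sum_{y_t \in \psi_G^{-1}(v_t)} p(y_t \mid g\gdot x_t)
= \sum_{y'_t \in \psi_G^{-1}(v_t)} p(g\gdot y'_t \mid g\gdot x_t)
= \sum_{y'_t \in \psi_G^{-1}(v_t)} p(y'_t \mid x_t),
\]
so $p(v_t\mid g\gdot x_t)=p(v_t\mid x_t)$ with no stabiliser bookkeeping and no fibre-size factors to cancel. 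The two actions are coupled only through the single group element appearing in \autoref{eq:sym_HMMb}, which is exactly what the change of variable exploits; you never need the actions to have matching orbit sizes or stabilisers.
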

In other words, if the underlying physical process and the observation model satisfy symmetries, then the orbits of the microscopic variables and the observer's measurements form a new HMM nested inside the original one. 
This allows observers to operate entirely at the emergent level, as discussed in the next subsection.

\begin{figure*}[t!]
  \centering
  \if1\compiletikz
  \includetikz{tikz/}{hierarchical_HMM}
  \else
    \includegraphics[width=1.98\columnwidth]{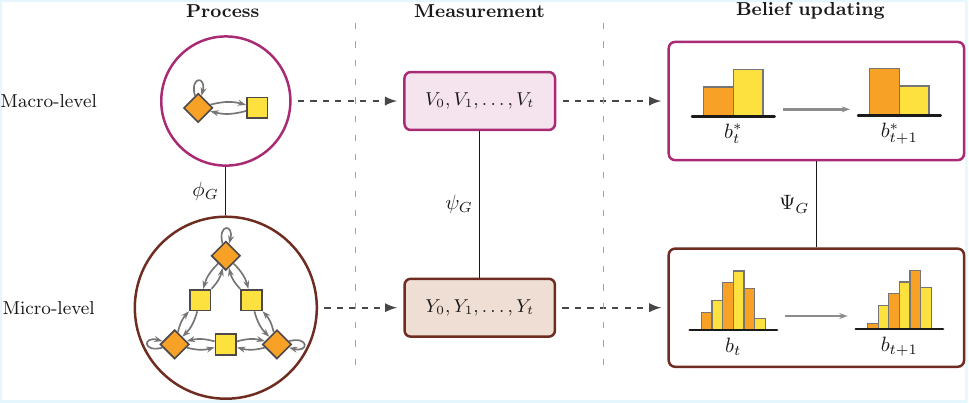}
  \fi 
  \caption{\textbf{Hierarchical Bayesian belief updating}. A process with equivariant symmetries has informationally closed levels (\textit{left}). If measurements respect the same symmetries, then there is a correspondence between the various levels of the process and measurements of different degrees of resolution (\textit{centre}). This hierarchical structure of processes and measurements gives rise to Bayesian beliefs that can be updated autonomously at various scales, without the need of accounting for the information from levels below (\textit{right}).}
  \label{fig:hierarchical_beliefs}
\end{figure*}

\subsection{Hierarchical belief updating}

HMMs let us now study how (noisy or partial) measurements inform us about latent processes. 
The optimal estimation of $X_t$ based on a sequence of measurements $\past{Y}{t}$ is given by \emph{Bayesian beliefs}, a Markov process $B_t\coloneq\mathbb{E}\{\mathds{1}_{X_t}|\past{Y}{t}\}$ where $\mathds{1}_{X_t}\coloneq\mathds{1}\{X_t=x_t\}$ is a one-hot encoding of $X_t$~\cite{sarkka2023bayesian}. When conditioned on a realisation $\{\past{Y}{t}=\past{y}{t}\}$, Bayesian beliefs take the form $b_{t}(x_t)=p(x_{t}|\past{y}{t})\in\mathbb{R}^{|\mathcal{X}|}$, which can be efficiently updated via standard Bayesian filtering as 
\begin{equation}\label{eq:bayes_matrix}
    b_{t} 
    = \text{Norm}\big( \diag( L_{y_t}) T_t b_{t-1} \big) 
    \coloneqq U\big(y_t,b_{t-1}\big).
\end{equation}
Above, $T_t\in\mathbb{R}^{|\mathcal{X}|\times|\mathcal{X}|}$ is a transition probability matrix with components $[p(x_{t}|x_{t-1})]_{x_{t},x_{t-1}\in\mathcal{X}}$, $L_{y_t}$ is a $|\mathcal{X}|$-dimensional likelihood vector with components $[L_{y_t}]_{x_t\in\mathcal{X}} = p(y_t|x_t)$, and $\text{Norm}(v)=v/\sum_j v_j$. 
Intuitively, $T_t$ propagates the belief through the dynamics and then $\diag( L_{y_t})$ re-weights the result according to the evidence brought by $y_t$.

One can also build beliefs about arbitrary coarse-grained versions of the latent process $W_t = f(X_t)$, resulting in 
$B^f_t:= \mathbb{E}\{\mathds{1}_{W_t}|\past{Y}{t}\}$ and 
$b^f_t:= p(w_t|\past{y}{t})$. These beliefs satisfy $B^f_t = P_f B_t$, where $P_f\in \{0,1\}^{|\mathcal{W}|\times|\mathcal{X}|}$ is a matrix with components $[P_f]_{w,x}=\mathds{1}\{w=f(x)\}$. 
However, most such coarse-grainings result in $B^f_t$ being non-Markovian, which forbids the use of update mechanisms analogous to \autoref{eq:bayes_matrix}. 
Coarse-grainings associated with symmetries circumvent these issues, resulting in Markovian beliefs that can be updated efficiently.

\begin{corollary}\label{cor:cool}
    Under the conditions of \autoref{teo:HMM}, the belief $b_t^*:= p(z_t|\past{v}{t})$ can be updated via $b^*_t=U(v_t, b^*_{t-1}) $ with $U$ the update operator defined in \autoref{eq:bayes_matrix}. Moreover, 
    \begin{equation}\label{eq:cool_beliefs}
        b^*_t = P_{\phi_G} \mathbb{E} \big\{ B_t | \past{V}{t}=\past{v}{t} \big\}.
    \end{equation}
\end{corollary}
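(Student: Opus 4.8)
The plan is to establish the two claims separately, leaning heavily on \autoref{teo:HMM}, which has already done the structural heavy lifting by certifying that $(Z_t,V_t)$ is a genuine HMM.

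For the update recursion, I would argue the claim is essentially immediate once $(Z_t,V_t)$ is recognised as an HMM. The belief $b_t^* = p(z_t|\past{v}{t})$ is precisely the Bayesian filtering distribution of this nested HMM, i.e. the posterior over its latent state $Z_t$ given its own observation stream $\past{v}{t}$. Standard filtering for any HMM yields exactly the recursion of \autoref{eq:bayes_matrix}, with the transition matrix replaced by the $Z$-transition matrix $[p(z_t|z_{t-1})]$ and the likelihood vector replaced by $[p(v_t|z_t)]$; both objects exist and are well-defined precisely because \autoref{teo:HMM} guarantees that $Z_t$ is Markov and that $V_t$ depends on the latent state only through $Z_t$. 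Hence $b_t^* = U(v_t,b_{t-1}^*)$ with $U$ as in \autoref{eq:bayes_matrix}, now instantiated on the coarse HMM.

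For \autoref{eq:cool_beliefs} I would proceed by a direct computation. Writing $b_t^*(z_t) = \mathbb{E}[\mathds{1}\{Z_t=z_t\}\,|\,\past{V}{t}=\past{v}{t}]$ and expanding the indicator over the orbit as $\mathds{1}\{\phi_G(X_t)=z_t\} = \sum_{x\in\phi_G^{-1}(z_t)}\mathds{1}\{X_t=x\}$ --- which is exactly the row action of $P_{\phi_G}$ --- reduces the statement to showing $\mathbb{E}[\mathds{1}_{X_t}\,|\,\past{V}{t}] = \mathbb{E}[B_t\,|\,\past{V}{t}]$. The key step is the tower property of conditional expectation: since $V_\tau=\psi_G(Y_\tau)$ is a deterministic function of $Y_\tau$ for every $\tau$, the $\sigma$-algebra generated by $\past{V}{t}$ is contained in that generated by $\past{Y}{t}$, so conditioning can be iterated, $\mathbb{E}[\mathds{1}_{X_t}\,|\,\past{V}{t}] = \mathbb{E}[\mathbb{E}[\mathds{1}_{X_t}\,|\,\past{Y}{t}]\,|\,\past{V}{t}]$, and the inner conditional expectation is by definition the micro-belief $B_t$. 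Stacking over $z_t$ and applying $P_{\phi_G}$ then yields \autoref{eq:cool_beliefs}.

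I do not expect a genuinely hard obstacle, since \autoref{teo:HMM} supplies the only non-trivial ingredient, namely closure of the dynamics at the orbit level. The point requiring care is conceptual rather than technical: one must verify the consistency of the two claims, i.e. that the recursion in the first part --- which only ever touches the macro-observations $\past{v}{t}$ --- computes the same quantity as the orbit-average of micro-beliefs in the second. This consistency is not a coincidence but a direct consequence of \autoref{teo:HMM}, as it is precisely the Markovianity of $Z_t$ together with the well-posed emission $p(v_t|z_t)$ that lets the filter close at the macro-level, so that averaging away the intra-orbit micro-detail never discards information relevant to $\past{V}{t}$.
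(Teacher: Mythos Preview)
Your proposal is correct and essentially identical to the paper's proof: both dispatch the recursion by invoking \autoref{teo:HMM} to certify that $(Z_t,V_t)$ is an HMM, and both obtain \autoref{eq:cool_beliefs} via the tower property using $\sigma(\past{V}{t})\subseteq\sigma(\past{Y}{t})$. The only cosmetic difference is that the paper writes the tower property as an explicit sum $\sum_{\past{y}{t}\in\psi^{-1}(\past{v}{t})} p(\past{y}{t}|\past{v}{t})\,p(z_t|\past{y}{t})$ and identifies $p(z_t|\past{y}{t})=P_{\phi_G}b_t$ afterwards, whereas you pull out $P_{\phi_G}$ first via the indicator expansion and then apply the tower property to $\mathds{1}_{X_t}$; the content is the same.
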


\begin{proof}
    The beliefs $b^*_t$ satisfy $b^*_t = U(v_t,b^*_{t-1})$ because, thanks to \autoref{teo:HMM}, $(Z_t,V_t)$ is an HMM. Furthermore,
    \begin{equation}\label{eq:very_nice}
    b^*_t 
    = 
    \!\!
    \sum_{\past{y}{t}\in\psi^{-1}(\past{v}{t})}
    \!\!
    p(\past{y}{t}|\past{v}{t}) p(z_t|\past{y}{t})
    = \mathbb{E} \big\{ B'_t | \past{V}{t}=\past{v}{t} \big\},
    \end{equation}
with $B'_t=\mathbb{E}\{\mathds{1}_{Z_t}|\past{Y}{t}\}=P_{\phi_G}B_t$.
\end{proof}

Just as symmetries induce emergent macrostates in the physical system, \autoref{cor:cool} reveals they also induce simpler belief states --- henceforth called \emph{abstract beliefs} --- for doing inference over informationally closed levels of the latent process (see \autoref{fig:hierarchical_beliefs}). 
These abstract beliefs are built via \autoref{eq:cool_beliefs}, which mixes micro-level beliefs corresponding to measurement sequences that differ only by symmetry transformations:
\begin{align}
    \Psi_G\{b_t[\past{y}{t}]\} 
    :=& 
    \mathbb{E}\big\{ \mathds{1}_{Z_t}|V_t=\psi_G(\past{y}{t}) \big\} \\
    =&
    P_{\phi_G}
    \Bigg(
    \sum_{\past{y}{t}': \psi_G(\past{y}{t}')=\psi_G(\past{y}{t})}
    \!\!\!\!\!\!
    c_{\past{y}{t}'} b_t[\past{y}{t}'], \label{eq:mapping_psi}
    \Bigg),
\end{align}
where the notation $b_t[\past{y}{t}] = p(x_t|\past{y}{t})$ highlights the dependency of the belief on the measurement sequence $\past{y}{t}$, and $c_{\past{y}{t}'}$ are mixing coefficients given by
\begin{equation}
    c_{\past{y}{t}'} = p(\past{y}{t}'|\psi_G(\past{y}{t})) = \frac{p(\past{y}{t}')}{\sum_{\past{y}{t}': \psi_G(\past{y}{t}')=\psi_G(\past{y}{t})} p(\past{y}{t}')}.
\end{equation}  
This mapping can be seen as a two-step process: it first mixes the micro-level beliefs $b_t[\past{y}{t}']$ in the equivalence class $\past{y}{t}'\in\psi^{-1}(\past{y}{t})$, and then marginalises the mixture via $P_{\phi_G}$. 
It is important to note that while \autoref{eq:mapping_psi} holds for the beliefs of any coarse-graining of the latent process $p(w_t|\past{y}{t})$ for $W_t=f(X_t)$, these generally cannot be updated as in \autoref{eq:bayes_matrix}. 
It is the possibility of efficiently updating abstract beliefs by direct Bayesian filtering what makes these special.

While \autoref{eq:mapping_psi} states how to build abstract beliefs from micro-level ones, this generally does not yield a well-defined mapping between individual beliefs. A sufficient condition for such mapping to exist is given next.
\begin{corollary}\label{cor:of_the_cor}
    Under the conditions of \autoref{cor:cool}, if
    \begin{equation}\label{eq:condition_strong}
    I(Z_t;\past{Y}{t}) - I(Z_t;\past{V}{t}) = 0
    \end{equation}
    then $b^*_t = P_{\phi_G} b_t$. 
\end{corollary}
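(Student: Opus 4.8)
The plan is to recognise the hypothesis \autoref{eq:condition_strong} as the equality case of the data processing inequality, and then to translate the resulting conditional independence directly into the claimed identity between belief vectors. First I would observe that, since $\past{V}{t}=\psi_G(\past{Y}{t})$ is a deterministic (component-wise) function of $\past{Y}{t}$, the triple $Z_t-\past{Y}{t}-\past{V}{t}$ trivially forms a Markov chain, so the data processing inequality gives $I(Z_t;\past{V}{t})\le I(Z_t;\past{Y}{t})$. The hypothesis asserts that this holds with equality, and the standard characterisation of equality in the data processing inequality is that $I(Z_t;\past{Y}{t}\mid\past{V}{t})=0$, i.e.\ that $Z_t$ and $\past{Y}{t}$ are conditionally independent given $\past{V}{t}$.

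The second step is to unpack this conditional independence. It states that $p(z_t|\past{y}{t},\past{v}{t})=p(z_t|\past{v}{t})$; but because $\past{v}{t}$ is itself a function of $\past{y}{t}$, conditioning on both is the same as conditioning on $\past{y}{t}$ alone, so the left-hand side equals $p(z_t|\past{y}{t})$. Hence
\begin{equation}
p(z_t|\past{y}{t}) = p(z_t|\past{v}{t}),
\end{equation}
for (almost) every realisation, with $\past{v}{t}=\psi_G(\past{y}{t})$. The right-hand side is by definition $b^*_t$, while the left-hand side is the $z_t$-component of the marginalised micro-belief, since $\sum_{x_t:\phi_G(x_t)=z_t}p(x_t|\past{y}{t})=p(z_t|\past{y}{t})$ identifies $p(z_t|\past{y}{t})$ with $(P_{\phi_G}b_t)(z_t)$. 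Combining these identifications yields $b^*_t=P_{\phi_G}b_t$.

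Equivalently, and more in the spirit of \autoref{cor:cool}, I could start from the mixing formula $b^*_t=\mathbb{E}\{B'_t\mid\past{V}{t}=\past{v}{t}\}$ with $B'_t=P_{\phi_G}B_t$ established there, and note that the conditional independence makes the integrand $p(z_t|\past{y}{t}')$ constant across the fibre $\past{y}{t}'\in\psi^{-1}(\past{v}{t})$. The convex average over the mixing coefficients $c_{\past{y}{t}'}$ then collapses to a single term, again giving $b^*_t=P_{\phi_G}b_t$ and thereby showing that the mapping $\Psi_G$ of \autoref{eq:mapping_psi} descends to a well-defined map on individual beliefs.

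I expect the only genuinely delicate point to be the justification that \autoref{eq:condition_strong} is exactly the equality case of the data processing inequality, which hinges on the deterministic dependence $\past{V}{t}=\psi_G(\past{Y}{t})$ and on the equivalence between vanishing conditional mutual information and conditional independence. The remainder is bookkeeping that converts statements about conditional laws into the matrix action of $P_{\phi_G}$ on belief vectors.
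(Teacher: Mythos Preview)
Your proposal is correct and matches the paper's proof almost exactly: both arguments reduce \autoref{eq:condition_strong} to the almost-sure identity $p(z_t\mid\past{y}{t})=p(z_t\mid\past{v}{t})$ and then read this off as $b^*_t=P_{\phi_G}b_t$. The only cosmetic difference is that the paper rewrites the mutual-information gap directly as the expected log-ratio $\mathbb{E}\{\log[p(Z_t\mid\past{Y}{t})/p(Z_t\mid\past{V}{t})]\}$ and appeals to the Kullback--Leibler nonnegativity, whereas you phrase the same step as the equality case of the data processing inequality via $I(Z_t;\past{Y}{t}\mid\past{V}{t})=0$; your alternative route through the mixing formula \autoref{eq:very_nice} is likewise exactly how the paper concludes.
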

\begin{proof}
    The above condition is equivalent to
    \begin{equation}
    \mathbb{E}\left\{\log \frac{p(Z_t|\past{Y}{t})}{p(Z_t|\past{V}{t})}\right\} 
    = 
    0,
    \end{equation}
    which implies that $p(z_t|\past{v}{t})=p(z_t|\past{y}{t})$ for all $\psi_G(\past{y}{t}) = \past{v}{t}$ almost surely. Using \autoref{eq:very_nice}, this leads to
    \begin{equation}
    b^*_t 
    = 
    p(z_t|\past{y}{t})
    \!\!
    \sum_{\past{y}{t}\in\psi^{-1}(\past{v}{t})}
    \!\!
    p(\past{y}{t}|\past{v}{t}) 
    = P_{\phi_G} b_t.
\end{equation}
\end{proof}
\autoref{eq:condition_strong} states that the coarse-grained measurement $\past{V}{t}$ contains all the information about $\finfut{Z}{t}{L}$ that exists in the full measurement $\past{Y}{t}$, being analogous to information closure at the belief level~\footnote{In fact, this result is a generalisation of the result presented in \cite{rosas2024software} showing that information closure implies computational closure.}. 
An equivalent condition is 
\begin{equation}
    p(z_t|\past{y}{t}) = p(z_t|g_0\gdot y_0,\dots,g_t\gdot y_t), 
    \quad
    \forall g_k\in G.
\end{equation}
If these conditions hold, then abstract beliefs are just a marginalisation of micro-level beliefs. This, in turn, ensures that the following diagram commutes:
\begin{figure}[h!]
  \centering
  \if1\compiletikz
  \includetikz{tikz/}{commuting_diagram}
  \else
    \includegraphics[width=0.82\columnwidth]{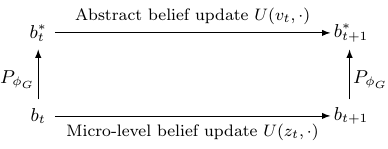}
  \fi 
\end{figure}

\noindent
This guarantees that Bayesian updating at different levels of abstraction are fully compatible: updating beliefs at a coarse level corresponds to marginalising the fine-grained beliefs, and vice versa. 
In contrast, when \autoref{eq:condition_strong} does not hold then $I(Z_t;\past{Y}{t}) > I(Z_t;\past{V}{t})$, meaning that the beliefs $p(z_t|\past{y}{t})=P_{\phi_G} b_t$ are more informative about $Z_t$ than $b^*_t$~\footnote{Note that, due to the tower property of conditional expectations, \autoref{eq:cool_beliefs} guarantees that both $b^*_t$ and $P_{\phi_G}b_t$ always agree on their mean, but the variance of $b_t^*$ will generally be larger.}. However, there may not exist an efficient way to update $p(z_t|\past{y}{t})$ without updating $b_t$ first (as $(Z_t,Y_t)$ may not be an HMM), making $b_t^*$ a less informative but more tractable alternative.

In summary, these results show that beliefs arising from equivariant HMMs can be updated at different levels of granularity, depending on the underlying symmetries (see \autoref{fig:hierarchical_beliefs}). Updating $b_t=p(x_t|\past{y}{t})$ provides estimates of system's state in full resolution, but their updating and storage can involve large computational costs --- especially when the phase space $\mathcal{X}$ is large. In contrast, updating $b^*_t=p(z_t|\past{v}{t})$ provides estimates about a macroscale description of the system, resulting in lower dimensional beliefs whose updating and storage can be substantially cheaper.

\begin{figure*}[!t]
  \centering
  \if1\compiletikz
  \includetikz{tikz/}{ehrenfest}
  \else
    \includegraphics[width=2\columnwidth]{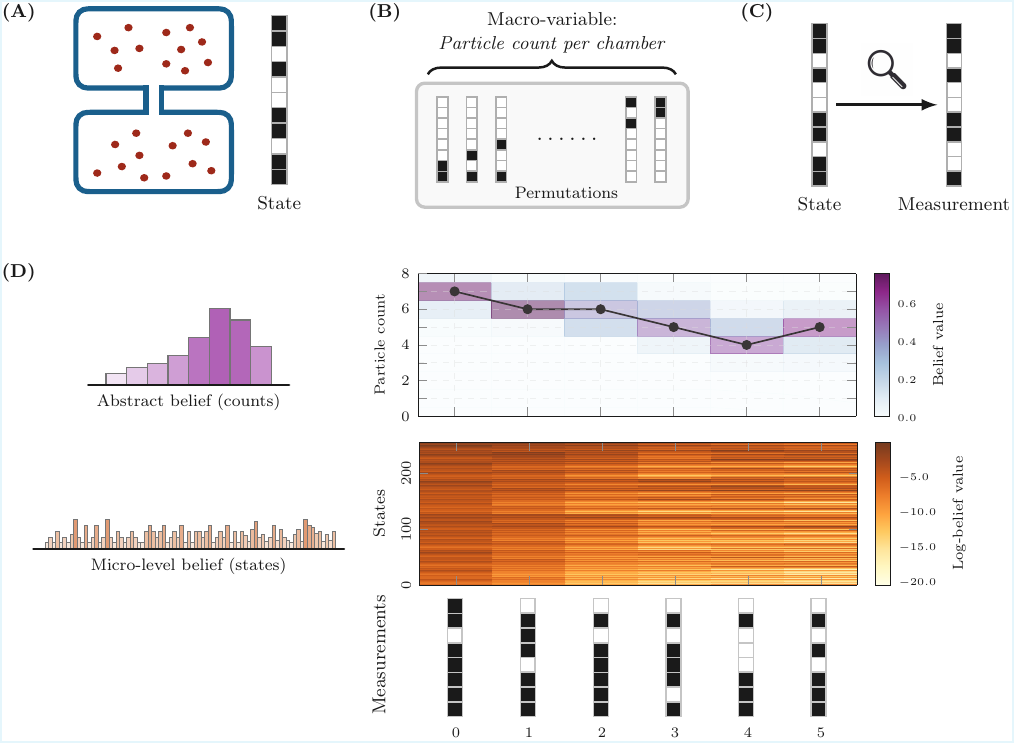}
  \fi 
  \caption{\textbf{Abstract beliefs in the Ehrenfest diffusion model.} 
  \textbf{(A)} The model considers $n$ particles contained in two connected chambers. The micro-level state is a binary vector that specifies in which container is each particle. 
  \textbf{(B)} The system's dynamics are equivariant with respect to permutations among the particles, giving rise to equivalence classes corresponding to the particle count per chamber. 
  \textbf{(C)} Measurements are noisy observations of each particle’s location, obtained via a binary symmetric channel.
  \textbf{(D)} Updating of micro-level beliefs about the microstate and abstract beliefs about the particle count for $n=8$. 
  The dimensionality of abstract beliefs grows linearly with $n$, while for micro-level beliefs it grows exponentially.
  }
  \label{fig:Ehrnfest}
\end{figure*}

\section{Case study 2: Ehrenfest's model} 

Let us illustrate these results on Ehrenfest's diffusion model~\cite{ehrenfest1907zwei}, a classic model from statistical mechanics that describes the behaviour of gas in a container made of two interconnected chambers. 
We will see how permutation symmetry at the microstate gives rise to the number of particles per box as a natural order parameter, over which efficient inference is possible.

The model describes the state of $n$ gas molecules via a binary vector $X_t = (X_t^{1},\dots,X_t^{n})$, where $X_t^{k} \in \{0,1\}$ determines in which of the two chambers the $k$-th molecule is located (see \autoref{fig:Ehrnfest}A). 
The model assumes that at most one molecule may switch between chambers at each time point, giving rise to the following dynamics:
\begin{align}\label{eq:probs_ehren}
    p(x_{t+1}|x_t) 
    =
    \begin{cases}
    1 - q  \quad &\text{\small if } x_{t+1}=x_t, \\
    q/n &\text{\small if } \|x_{t+1}-x_t\| = 1,  \\
    0 \quad &\text{\small otherwise},
    \end{cases}
\end{align}
where $q$ is the probability of a particle switching and $\|x-x'\|$ is the Hamming distance between $x$ and $x'$. 
Let us also consider measurements $Y_t = (Y_t^{1},\dots,Y_t^{n})$, where $Y_t^k\in\{0,1\}$ is an estimate of $X_t^k$ obtained from a binary symmetric channel with crossover probability $r$, so that 
\begin{align}\label{eq:measurement_erhenfest}
    p(y_t|x_t) 
    &= 
    \prod_{k=1}^n 
    r^{\mathds{1}\{x_t^k \neq y_t^k\}}
    (1-r)^{\mathds{1}\{x_t^k = y_t^k\}}.
\end{align}
From these measurements one can build beliefs about $X_t$ of the form $b_t=p(x_t|\past{y}{t})$. 
Unfortunately, updating and storing such beliefs is computationally unfeasible for systems with more than a few particles: the dimensionality of $b_t$ is $|\mathcal{X}|=2^n$, so the calculation of one pass of  \autoref{eq:bayes_matrix} requires $\mathcal{O}(2^{2n})$ operations. For example, performing a single belief update on a system with $n=40$ particles with a supercomputer running at $10^{18}$ FLOPS would take around fourteen days, and with $n=50$ particles it would require around forty thousand years.

Let us now use symmetries to build and update coarse-grained beliefs about macroscopic properties. Given that gas particles are identical in this model, the dynamics of $X_t$ are equivariant with respect to the permutation group $S_n$ (see \autoref{fig:Ehrnfest}B). Indeed, by defining 
\begin{equation}
    \sigma\gdot x_t = \big(x_t^{\sigma(1)},\dots,x_t^{\sigma(n)}\big)
    \quad
    \text{for }
    \sigma\in S_n,
\end{equation} 
one can use \autoref{eq:probs_ehren} to verify that $p(\sigma\gdot x_{t+1}|\sigma\gdot x_t)=p(x_{t+1}|x_t)$ holds. 
The resulting orbits are given by $Z_t=\phi_{S_n}(X_t) = \sum_{k=1}^n X_t^k$, which keeps track of the dynamics of the number of particles in each chamber. 
Given the dynamics of the Ehrenfest model, $Z_t$ can be seen to follow a Markovian death-birth process --- because only one particle can flip per timestep, so $Z_t$ increases or decreases by one with probability proportional to the number of particles on each side.

In addition to $X_t$ satisfying dynamical equivariance, it is direct to see that $p(\sigma\cdot y_t|\sigma\cdot x_t) = p(y_t|x_t)$ also holds. By further assuming a permutation-invariant initial condition for $X_0$, this makes $(X_t,Y_t)$ to satisfy \autoref{eq:sym_HMM}. Thus, one can use the previous results to generate and update beliefs about $Z_t$ using coarse-grainings of $Y_t$. Specifically, by defining $V_t = \sum_{k=1}^n Y_t^k$, 
\autoref{teo:HMM} shows that $Z_t$ and $V_t$ also constitute an HMM, so \autoref{cor:cool} guarantees that one can use $V_t$ to track the evolution of $Z_t$ via the abstract belief $b_t^*=p(z_t|\past{v}{t})$. 

Crucially, the abstract beliefs $b_t^*$ are $n+1$ dimensional and hence are much more efficient to compute and update than the micro-level beliefs $b_t$ (see \autoref{fig:Ehrnfest}D). 
Following the above example, updating the abstract beliefs for systems with $n=40$ or $n=50$ particles using a supercomputer running at $10^{18}$ FLOPS would take no more than a few femptoseconds ($10^{-15}$s). 
That said, note that \autoref{eq:condition_strong} is not satisfied in this scenario, even when $X_0$ is initialised with a permutation-invariant prior. 
Moreover, numerical evaluations have confirmed the validity of the mapping between micro-level and abstract beliefs established in \autoref{eq:mapping_psi}.

In addition to the dynamics of the number of particles (which is the coarsest informationally closed level of the system), the Ehrenfest model has an intricate hierarchy of intermediate informationally closed levels corresponding to the subgroup lattice of $S_n$. 
These intermediate levels correspond to distinctions within the configurations with an equal number of particles in each chamber, where disjoint groups of $M$ molecules become new units $z^k_t\in\{0,\dots,M\}$. Indeed, each subgroup corresponds to treating certain particles as indistinguishable (i.e., allowing permutations only within specific blocks), which produces progressively coarser descriptions of the system in a similar way as renormalisation groups merge microscopic degrees of freedom into larger effective units. 
Thanks to \autoref{cor:cool}, all these levels can also be tracked via coarse-grained beliefs, which can be updated using coarse-grained measurements that sum the estimation of particles in the corresponding blocks.

\section{Discussion}

This paper combines ideas from information theory, group theory, and statistical mechanics to propose symmetry as a fundamental mechanism at the origin of hierarchical emergence. 
Indeed, symmetries that commute with the dynamics can always be factored out to reveal simpler self-contained levels of a process. 
This seemingly simple intuition has far-reaching implications thanks to the power of group theory, which guarantees not just one but a whole range of emergent levels arranged in accordance with the subgroup lattice. 
Furthermore, emergent levels displayed by objective processes are reflected in Bayesian beliefs about them that can be hierarchically updated at various levels of resolution. 
This, in turn, opens important opportunities for investigating high-dimensional systems that are only partially observable, as tracking Bayesian beliefs targeted to macroscopic properties can make feasible inferential problems that would otherwise be computationally intractable.

In the objective domain, the results presented here illuminate the origins of hierarchical emergence, contributing to better understand not only \emph{when} emergence happens but also \emph{how} it does so --- backtracking from patterns of behaviour to mechanisms~\cite{rosas2022disentangling}. 
This complements ongoing efforts to understand other flavours of emergence focused on enhanced controllability~\cite{hoel2025causal,jansma2025engineering} or whole-part relationships~\cite{varley2022flickering,tolle2024evolving,pigozzi2025associative}. 
That said, whether symmetry is relevant for these other kinds of emergence is an interesting open question. 
Renormalisation group theory also provides an explanation for why emergence occurs near criticality through scale-invariance, and more generally through effective field theories that filter irrelevant degrees of freedom~\cite{batterman2001devil}. 
Symmetry plays a different role in renormalisation and effective field theory, determining the compatible operators and providing constraints to the emergent laws. In contrast, symmetries in this work refer to dynamics and directly determine the form of the coarse-graining mappings. Thus, both approaches employ symmetries to generate hierarchies but attain this via different procedures.

A key aspect of the approach presented here is the combination of objective and subjective elements, which opens the door to investigate not only properties of processes but also our inferences about them. 
Our approach to hierarchical belief updating is different from standard hierarchical Bayesian modelling~\cite{shiffrin2008survey}, which focuses on hierarchies in parameter space and generally does not allow to update coarse descriptions without involving the fine-grained layers. 
In addition, the efficiency of abstract beliefs as proposed here differs from the approach used in variational inference~\cite{blei2017variational}, which simplifies the class of models considered at the cost of introducing approximation errors. In contrast, the method presented here achieves exact inference by carefully restricting the target of prediction. 
Future work may combine these approaches, considering ongoing efforts combining variational methods and renormalisation~\cite{friston2024from}.

At its core, the results presented here support the view that the hierarchical structure of systems of interest is mirrored by an analogous hierarchical structure in our beliefs about them. In this, these results resemble old cybernetics ideas about how knowledge should resemble structural properties of its target~\cite{conant1970every,ashby2011variety}, but arrive at these conclusions using modern tools of information theory and Bayesian filtering. 
Recent work also revisits cybernetics from a similar perspective but using category theory~\cite{virgo2025good,baltieri2025bayesian}, exploring the relationship between external processes and epistemic beliefs in terms of functors and adjunctions. Future work may explore the potential of combining the presented results with such techniques, and also with approaches that blend renormalisation with principles of statistical inference~\cite{berman2023bayesian} or information theory~\cite{koch2018mutual,lenggenhager2020optimal}.

Overall, the presented results contribute towards a principled formalisation of the general notion of abstraction, which is a fundamental open problem in cognitive science~\cite{ho2019value,ho2022people,de2023goals} and artificial intelligence~\cite{chollet2019measure,geiger2025causal,konidaris2019necessity}. 
There have been important advances formalising abstractions in the context of reinforcement learning~\cite{abel2020thesis,allen2023thesis}, including approaches leveraging symmetries~\cite{van2020mdp}, but most efforts have focused on scenarios where the environment is fully observable.
While partially observable scenarios are formally solvable by turning them into fully observable problems by computing their Bayesian beliefs~\cite{kaelbling1998planning}, this approach doesn't work in practice because beliefs often are too high-dimensional. 
The approach presented here may be used to coarse-grain Bayesian beliefs in a principled manner, which could be used to address problems that are intractable to current methods. 
Moreover, there is potential for exploring these ideas leveraging architectures inspired by geometric deep learning~\cite{bronstein2021geometric} focused on equivariance --- see e.g. Refs.~\cite{di2025shaping,keller2025flow}.

A key challenge to make emergence useful for learning scenarios is how to design efficient methods to identify emerging coarse-graining levels. In addition to existent methods based on linear methods~\cite{barnett2023dynamical} or differentiable estimators~\cite{mcsharry2024learning}, the results presented here suggest that symmetries could provide yet another route to do this. 
Moreover, focusing on more general forms of weak symmetry could open the door to efficient graph-colouring algorithms while expanding the scope of applicability~\cite{makse2025symmetries}, constituting a promising direction for future work.

\section*{Acknowledgements}

Some of the ideas explored here were motivated by inspiring discussions with Adam Goldstein and Hernan Makse during the Softmax UK gathering 2025. 
I am also grateful for feedback and insightful conversations with Manuel Baltieri, Alexander Boyd, Alexander Gietelink Oldenziel, Ryota Kanai, Franz Nowak, Lucas Teixeira, and Nathaniel Virgo. 
This work has been supported by the UK ARIA Safeguarded AI programme, the PIBBSS Affiliateship programme, and Open Philanthropy.

\bibliography{references}
\appendix

\section{Proof of \autoref{teo:HMM}}
\label{app:proof_teo_HMM}

\begin{proof}
    Using \autoref{eq:sym_HMMa} one can show that
    \begin{align}
        p\big(z_{t+1}|g\gdot x_t\big) 
        &=
        \sum_{x_{t+1}\in\phi_G^{-1}(z_{t+1})}
        p\big(x_{t+1}|g\gdot x_t\big) \\
        &=
        \sum_{x_{t+1}\in\phi_G^{-1}(z_{t+1})}
        p\big(x_{t+1}|x_t\big) \\
        &=
        p\big(z_{t+1}|x_t\big).
    \end{align}
    A similar calculation using \autoref{eq:sym_HMMb} leads to 
    \begin{equation}
        p\big(v_{t}|g\gdot x_t\big) = p\big(v_{t}|x_t\big). 
    \end{equation}
    These results imply that
        $p\big(z_{t+1}|x_t\big) 
        =
        p\big(z_{t+1}|z_t\big)$ and 
        $p\big(v_t|x_t\big) 
        =
        p\big(v_t|z_t\big)$, 
    which, combined with \autoref{eq:initial_sym}, imply that
    \begin{align}
    p(&\finfut{z}{0}{t+1}\!,\finfut{v}{0}{t})
    =
    \!\!\!\!\!\!
    \sum_{
    \substack{\finfut{x}{0}{t+1}\in \phi_G^{-1}(\finfut{z}{0}{t+1})\\
    \finfut{y}{0}{t}\in \psi_G^{-1}(\finfut{v}{0}{t}) }
    }
    \!\!\!\!\!\!\!\!
    p(x_0)
    \!
    \prod_{\tau=0}^{t} 
    p( x_{\tau+1}| x_{\tau})
    p( y_{\tau}|x_{\tau} )  \nonumber\\
    =&
    \!\!\!\!\!
    \sum_{
    \substack{\finfut{x}{0}{t}\in \phi_G^{-1}\!(\finfut{z}{0}{t})\\
    \finfut{y}{0}{t-1}\in \psi_G^{-1}\!(\finfut{v}{0}{t-1}\!) }
    }
    \!\!\!\!\!\!\!\!\!\!
    p( z_{t+1}| x_{t}) p( v_{t}|x_{t} ) 
    p(x_0)
    \!
    \prod_{\tau=0}^{t-1} 
    \!
    p( x_{\tau+1}| x_{\tau})
    p( y_{\tau}|x_{\tau} )  \nonumber\\
    =&
    \,p( z_{t+1}| z_{t}) p( v_{t}|z_{t} )
    \!\!\!\!\!\!\!\!\!\!\!\!
    \sum_{
    \substack{\finfut{x}{0}{t}\in \phi_G^{-1}(\finfut{z}{0}{t})\\
    \finfut{y}{0}{t-1}\in \psi_G^{-1}(\finfut{y}{0}{t-1}) }
    }
    \!\!\!\!\!\!\!\!\!\!\!
    p(x_0)
    \prod_{\tau=0}^{t-1} 
    p( x_{\tau+1}| x_{\tau})
    p( y_{\tau}|x_{\tau} ) \nonumber \\
    =&\ldots \nonumber\\
    =&
    \,p(z_0)
    \prod_{\tau=0}^{t} 
    p( z_{\tau+1}| z_{\tau})
    p( v_{\tau}| z_{\tau} ). 
    \end{align}
    
\end{proof}

\section{Details of Hopfield network}
\label{app:hopfield}

For the numerical experiments, Hopfield networks of $20\times 20$ neurons were trained on the rotations and reflections of the letters $G$, $J$, and $R$, giving rise to $m=3\times 8 = 24$ patterns. To enhance the capacity of the network to store non-orthogonal patterns, these patterns were encoded into synaptic weights using a variation of \autoref{eq:hebb} known as the \textit{pseudo-inverse learning rule}~\cite{kanter1987associative}. This approach yields a synaptic weight matrix $W$ given by
\begin{equation}
    W = \frac{1}{m} Q \Big( \frac{1}{m} Q^\top Q \Big)^{-1} Q^\top,
\end{equation}
where $Q=[q_1,\dots,q_m]$ is an $n\times m$ matrix holding $m<n$ target patterns $q_\mu=(q_\mu^1,\ldots,q_\mu^n) \in\{-1,1\}^n$ as columns. 

For this synaptic encoding, each pattern $q_\mu$ is an eigenvector of $W$ with eigenvalue $1$, while vectors orthogonal to the span of the patterns are also eigenvectors but with zero eigenvalue. This implies that
\begin{equation}\label{eq:prod}
    W x_t 
    = 
    \sum_{\mu=1}^m 
    \big(
    h_\mu\gdot x_t
    \big)
    q_\mu,
\end{equation}
where $(a\cdot b) = \sum_i a_ib_i/n$ is the normalised dot product, and $h_\mu$ are the rows of $(Q^\top Q)^{-1}Q^\top$, which satisfy $(h_\mu\cdot~q_\nu) = \delta_\mu^\nu$. Note that if $q_1,\ldots,q_m$ are orthonormal, then $h_\mu=q_\mu$. 
Thus, in the pseudo-inverse setting, one can define a generalised Mattis magnetisation as $Z_t = (Z_t^1,\dots,Z_t^m)$ with
\begin{equation}
    Z_t^\mu := (h_\mu\gdot X_t)/n\in[-1,1], 
\end{equation}
which reduces to the original when the patterns are orthonormal. Using an analogous reasoning as in \autoref{sec:hopfield}, one can show this magnetisation can be retrieved via orbits of permutations satisfying
\begin{equation}
\big( h_\mu\gdot (\rho\gdot x) \big)
= 
( h_\mu\gdot  x )
\quad
\forall \mu=1,\ldots,m,
\forall x\in\{-1,1\}^n.
\end{equation}
It can be shown that the network dynamics are equivariant with respect to this symmetry, which makes this magnetisation informationally closed.

\end{document}